\theoremstyle{plain} 
\newtheorem{thm}{Theorem}
\theoremstyle{definition}
\theoremstyle{remark} 
\newtheorem*{astep}{A-step}
\newtheorem*{pstep}{P-step}
\newtheorem*{cstep}{C-step}
\newcommand{\prob}{\mathsf{P}}
\newcommand{\unif}{{\sf Unif}}
\newcommand{\nm}{{\sf N}}
\newcommand{\gam}{{\sf Gamma}}
\newcommand{\chisq}{{\sf ChiSq}}
\newcommand{\weib}{{\sf Weib}}
\newcommand{\RR}{\mathbb{R}}
\newcommand{\YY}{\mathbb{Y}}
\newcommand{\UU}{\mathbb{U}}
\renewcommand{\S}{\mathcal{S}}
\newcommand{\hatG}{\widehat{G}}
\newcommand{\stleq}{\leq_{\text{st}}}
\newcommand{\stgeq}{\geq_{\text{st}}}
\title{Generalized inferential models for censored data}
\author{Joyce Cahoon\footnote{Department of Statistics, North Carolina State University; {\tt jyu21@ncsu.edu}, {\tt rgmarti3@ncsu.edu}} \quad \text{and} \quad Ryan Martin$^*$}
\date{\today}
\begin{document}

\maketitle 

\begin{abstract}   
Inferential challenges that arise when data are censored have been extensively studied under the classical frameworks.  In this paper, we provide an alternative generalized inferential model approach whose output is a data-dependent plausibility function.  This construction is driven by an association between the distribution of the relative likelihood function at the interest parameter and an unobserved auxiliary variable.  The plausibility function  emerges from the distribution of a suitably calibrated random set designed to predict that unobserved auxiliary variable.  The evaluation of this plausibility function requires a novel use of the classical Kaplan--Meier estimator to estimate the censoring rather than the event distribution.  We prove that the proposed method provides valid inference, at least approximately, and our real- and simulated-data examples demonstrate its superior performance compared to existing methods.  
\smallskip

\emph{Keywords and phrases:} Kaplan--Meier estimator; plausibility; random set; relative likelihood; survival analysis.
\end{abstract}

\section{Introduction}
\label{S:intro}  

Data are said to be {\em censored} when at least one of the observations is incomplete, i.e., only an interval that contains the actual value is available.  For example, in clinical trials or other time-to-event studies, it may happen that only a lower bound for the event time is observed because subjects drop out of the study, or the study ends before the event takes place.  This is called {right-censoring}.  Alternatively, in environmental applications, it may happen that only an upper bound on a chemical content is observed because the available device is limited to a certain detection level.  This is called {left-censoring}.  Of course, a combination of left- and right-censoring, or interval-censoring, is possible as well.  Beyond censoring direction, there are also Type~I and Type~II classifications, but we refer the reader to \citet{Klein2003} for these details.  For concreteness, we focus on Type~I right-censored data in a time-to-event setting, but it is easy to apply the same ideas for left- or interval-censored data and for contexts other than time.  

Let $X_i$ denote the event time and $C_i$ the censoring time for unit $i=1,\ldots,n$.  Under right censoring, the observed data consists of the pair 
\begin{equation}
T_i = \min(X_i, C_i), \quad D_i = 1(X_i \leq C_i), \quad i=1,\ldots,n, 
\label{eq:data}
\end{equation}
where $1(\cdot)$ is the indicator function, so that $D_i$ identifies whether $T_i$ is an event time or a censoring time.  Let $Y=\{(T_i, D_i): i=1,\ldots,n\}$ denote the observable data.  

A common assumption that we will adopt here is that of {\em random censoring}, where $X_1,\ldots,X_n$ are independent and identically distributed (iid) with continuous distribution function $F_\theta$, depending on a parameter $\theta \in \Theta$; $C_1,\ldots,C_n$ are iid with distribution function $G$; and the $X_i$'s and $C_i$'s are independent of one another \citep{lawless2011}.  Since the variables are time (or some other ``amount''), the statistical models, $F_\theta$, considered here and throughout the literature on this topic are supported on subsets of $(0,\infty)$ and are typically right-skewed.  The goal is to make inference on the unknown parameter $\theta$ of the time-to-event distribution; $G$ is an unknown nuisance parameter assumed to have no dependence whatsoever on $\theta$.  

For data $y = \{(t_i, d_i)\}$ observed from a random, Type~I, right-censored data generating process, \citet[][Sec.~3.5]{Klein2003} gives the likelihood function 
\begin{equation}
\label{eq:lik}
L_y(\theta) \propto \prod_{i=1}^n f_\theta(t_i)^{d_i} \bar F_\theta(t_i)^{1-d_i}, \quad \theta \in \Theta,
\end{equation}
where $f_\theta = F_\theta'$ and $\bar F_\theta = 1-F_\theta$ are the density and survival functions corresponding to $F_\theta$, respectively.  From the likelihood in \eqref{eq:lik}, it is relatively straightforward to produce point estimates, asymptotic confidence regions, or even Bayesian posterior distributions \citep{Ibrahim2001}.  These results, however, are not fully satisfactory as their coverage probabilities can be far from the target in finite samples.

In this paper, we take an alternative approach to construct an {\em inferential model} whose output takes the form of a non-additive, data-dependent belief/plausibility function.  This construction relies on a particular connection between the data, parameter, and an unobservable auxiliary variable.  Here, following the recommendations in \citet{plausfn, gim}, we make use of an association driven by the relative likelihood derived from \eqref{eq:lik}.  The belief function arises from the introduction of a (nested) random set aimed to predict that unobserved auxiliary variable.  An important consequence of this particular construction is that the belief function output inherits a calibration or {\em validity} property.  A precise statement is given in Section~\ref{S:background}, but an important practical consequence of the validity property is that the confidence, or plausibility, regions derived from the inferential model achieve the nominal frequentist coverage probability.  

Unfortunately, the presence of censoring complicates the basic inferential model construction and validity properties described in the references above.  In particular, the distribution $G$ of the censoring times is an infinite-dimensional nuisance parameter whose influence is difficult to overcome.  Here we propose an extension of the basic approach above, one that makes novel use of the Kaplan--Meier estimator \citep[e.g.,][]{Kaplan1958} for the censoring distribution $G$.  
From this, we develop a Monte Carlo algorithm to evaluate the belief and plausibility of any hypothesis about $\theta$, and we show---both theoretically and empirically---that inference drawn from the generalized inferential model output is valid, at least approximately, in the sense described in Section~\ref{S:background}.  Details of this construction and its properties are presented in Section~\ref{S:gim} and numerical examples comparing the proposed solution to that of more traditional methods are given in Section~\ref{S:examples}.  Finally, some concluding remarks are given in Section~\ref{S:discuss}.

\section{Background}
\label{S:background}

\subsection{Basic inferential models}

For observable data $Y \in \YY$, consider a statistical model $\{\prob_{Y|\theta}: \theta \in \Theta\}$ that contains candidate probability distributions for $Y$, indexed by a parameter space $\Theta$.  
As presented in \citet{imbasics, imbook}, an inferential model is a map from the available inputs, including observed data and posited statistical model, to a data-dependent function, $b_y: 2^\Theta \to [0,1]$, where $b_y(A)$ denotes the data analyst's degree of belief about the hypothesis $A \subseteq \Theta$ based on the observed data $Y=y$.  Naturally, inferences would be drawn from $b_y$.  This definition of a inferential model encompasses many different approaches, including those based on additive beliefs, e.g., Bayes, fiducial, and others, as well as non-additive beliefs like those discussed below.  

What properties should $b_y$ have?  
In the scientific applications we have in mind here, if it is desired that large $b_y(A)$ be interpreted as support for the claim that $A^c$ is false, then it becomes essential that the degrees of belief be calibrated so that we know what a ``large'' $b_y$ means, and consequently avoid making ``systematically misleading conclusions'' \citep{Reid2015}.  We formalize this need for an inferential model to be calibrated in terms of the following validity constraint: that $b_y$ satisfies 
\begin{equation}
\label{eq:valid}
\sup_{\theta \not\in A} \prob_{Y|\theta}\{ b_Y(A) > 1-\alpha\} \leq \alpha, \quad \forall \; \alpha \in [0,1], \quad \forall \; A \subseteq \Theta. 
\end{equation}
That is, if the hypothesis $A$ is false, so that $A \not\ni \theta$, the degree of belief $b_Y(A)$, as a function of $Y \sim \prob_{Y|\theta}$, will be stochastically no larger than $\unif(0,1)$.  This validity condition can equivalently be expressed in terms of the plausibility function, $p_y(A) = 1-b_y(A^c)$, the belief function's dual \citep{Shafer1976}.  This dual inferential model output is valid if 
\begin{equation}
\label{eq:valid2}
\sup_{\theta \in A} \prob_{Y|\theta}\{ p_Y(A) \leq \alpha\} \leq \alpha, \quad \forall \; \alpha \in [0,1], \quad \forall \; A \subseteq \Theta.
\end{equation}
Following this constraint, the plausibility values can be compared to a $\unif(0,1)$ scale, and decisions based on such comparisons will control frequentist error rates \citep{gim}.  

Based on the {\em false confidence theorem} in \citet{balch2019}, \citet{martin.nonadditive} argues that validity as in \eqref{eq:valid} requires that the degrees of belief be non-additive.  Since we take this validity property to be fundamental to the logic of statistical inference, we focus here on genuinely non-additive degrees of belief, e.g., the belief/plausibility functions in \citet{Shafer1976} or the special case of necessity/possibility functions in \citet{Dubois2006}, \citet{Dubois2012}, and \citet{destercke.dubois.2014}.

How to construct a valid inferential model?  The original construction in \citet{imbasics}, starts with an association, i.e., a characterization of the statistical model based on what is called an auxiliary variable.  The prototype for this takes the form 
\begin{equation}
\label{eq:basic}
Y = a(\theta, U), \quad U \sim \prob_U, 
\end{equation}
where $a$ is a given function and $\prob_U$ is a distribution for $U \in \UU$ that does not depend on any unknown parameters.  This describes an algorithm for simulating from $\prob_{Y|\theta}$ but also guides our intuition about inference.  That is, {\em if $U$ were observable, along with $Y$, then the best possible inference follows by simply solving \eqref{eq:basic} for $\theta$}, as in \eqref{eq:basic.focal}.  Since $U$ is actually unobservable, it is tempting to create a sort of ``posterior distribution'' for $\theta$ by taking draws from $\prob_U$, plugging them into \eqref{eq:basic}, with the observed $Y=y$, and solving for $\theta$.  This is basically Fisher's fiducial argument \citep[e.g.,][]{Fisher1973, Dempster1963, Hannig2016}, which generally leads to additive beliefs that fail to meet the validity condition.  Non-additivity can be introduced by stretching points sampled from $\prob_U$ into random sets designed to hit the unobserved value of $U$ in \eqref{eq:basic} that corresponds to the observed $Y=y$ and the true value of $\theta$.  The following three steps summarize this construction.

\begin{astep}
Given the {\em association} \eqref{eq:basic} and the observed $Y=y$, define the focal elements
\begin{equation}
\label{eq:basic.focal}
\Theta_y(u) = \{\theta: y = a(\theta, u)\}, \quad u \in \UU. 
\end{equation}
\end{astep}

\begin{pstep}
Introduce a random set $\S \sim \prob_\S$, taking values in $2^\UU$, designed to {\em predict} the unobserved value of $U$ in \eqref{eq:basic}. 
\end{pstep}

\begin{cstep}
{\em Combine} the output of the A- and P-steps to get a new random set 
\[ \Theta_y(\S) = \bigcup_{u \in \S} \Theta_y(u), \quad \S \sim \prob_\S, \]
and define the belief function,  
\[ b_y(A) = \prob_\S\{\Theta_y(\S) \subseteq A\}, \quad A \subseteq \Theta, \]
and its dual, the plausibility function, $p_y(A) = 1-b_y(A^c)$.  
\end{cstep}
Under very mild conditions on the user-specified random set $\S$, the corresponding inferential model is valid in the sense of \eqref{eq:valid}.  Indeed, the only requirement is that $\S$ be calibrated to predict unobserved draws from $\prob_U$.  This is relatively easy to arrange because $\prob_U$ is known and $\S \sim \prob_\S$ is user-specified.  More specifically, let $\gamma(u) = \prob_\S(\S \ni u)$, an ordinary function on $\UU$, be determined implicitly by $\prob_\S$; note that $\gamma$ is the plausibility contour corresponding to $\S$.  Then validity as in \eqref{eq:valid} corresponds to a stochastic dominance property, namely, $\gamma(U) \stgeq \unif(0,1)$.  For example, in what follows, we work with a random set $\S$ of the form 
\begin{equation}
\label{eq:one.sided}
\S = [\tilde U, 1], \quad \tilde U \sim \prob_U := \unif(0,1),
\end{equation}
so that $\gamma(u) = u$ and, hence, $\gamma(U) = U \sim \unif(0,1)$.  Though not strictly necessary for validity, efficiency considerations suggest that $\S$ be nested, like in \eqref{eq:one.sided}, which makes the belief function consonant; the validity property together with consonance is reminiscent of the confidence structure developments in \citet{balch2012}.  

\subsection{Generalized inferential models}
\label{S:gimintro}

As \citet{gim} argued, the above formulation can be rather rigid; greater flexibility and, in some cases, improved performance can be gained by working with a so-called {\em generalized association}, one that does not fully characterize the posited statistical model.  As above, suppose we have data $Y \sim \prob_{Y|\theta}$, but consider 
\begin{equation}
\label{eq:gassoc}
R_Y(\theta) = H_\theta^{-1}(U), \quad U \sim \prob_U = \unif(0,1), 
\end{equation}
where $R_y(\theta)$ is some real-valued function of our parameter of interest $\theta$, indexed by the data, and $H_\theta$ is its distribution function,  
\[ H_\theta(r) = \prob_{Y|\theta}\{R_Y(\theta) \leq r\}, \quad r \in \RR. \]
Unlike \eqref{eq:basic}, the relation \eqref{eq:gassoc} does not describe the data-generation process, it only establishes a link between data, parameter, and auxiliary variable, which is all that was needed for the inferential model construction described above.  

The advantage of this generalized association is, as explained in \citet{gim}, that we have directly reduced the dimension of the auxiliary variable, from at least the dimension of $\theta$ down to 1.  This greatly simplifies the construction of a (good) random set $\S$ for predicting that unobservable quantity.  What is an appropriate choice of $R_y(\theta)$?  The options are virtually unlimited, but since dimension reduction would generally result in loss of information, and since we prefer to retain as much information as possible, we opt to take $R_y(\theta)$ as the {\em relative likelihood}
\begin{equation}
\label{eq:rel.lik}
R_y(\theta) = L_y(\theta) / L_y(\hat\theta),
\end{equation}
where $\hat\theta$ is the maximum likelihood estimator, i.e., $\hat\theta = \arg\max_\vartheta L_y(\vartheta)$.  Extensive studies have explored the use of relative likelihood to define degrees of belief \citep[e.g.,][]{Shafer1976, Wasserman1990}, but they focus on examples where the likelihood cannot be normalized or where a normalized likelihood is misleading \citep{Shafer1982}.  Our approach differs in the sense that we can evaluate the distribution of the relative likelihood by Monte Carlo.  From here, the inferential model construction is conceptually straightforward.

\begin{astep}
Set $\Theta_y(u) = \{\theta: R_y(\theta) = H_\theta^{-1}(u)\}$ for $u \in [0,1]$.  
\end{astep}

\begin{pstep}
Define $\S = [\tilde U, 1]$, where $\tilde U \sim \unif(0,1)$ like in \eqref{eq:one.sided}; so that the distribution, $\prob_\S$, is fully determined by the uniform distribution.
\end{pstep}

\begin{cstep}
Combine the two sets above to get 
\[ \Theta_y(\S) = \bigcup_{u \in \S} \Theta_y(u) = \{\theta: H_\theta(R_y(\theta)) \geq \tilde U\}, \quad \tilde U \sim \unif(0,1). \]
Then the plausibility contour is 
\begin{equation}
\label{eq:pl.contour}
p_y(\theta) := \prob_\S\{\Theta_y(\S) \ni \theta\} = H_\theta(R_y(\theta)), \quad \theta \in \Theta, 
\end{equation}
which determines the full belief and plausibility functions.  It follows from Theorem~1 in \citet{gim} that the generalized inferential model with plausibility function determined by \eqref{eq:pl.contour} achieves the validity property in \eqref{eq:valid2}.  
\end{cstep}


It is often the case that the full parameter of the statistical model is of the form $(\theta, \eta)$, i.e., $Y \sim \prob_{Y|\theta,\eta}$, where $\theta$ is the quantity of interest and $\eta$ is a so-called {\em nuisance parameter}.  The censored data application considered here is of this form---with the censoring distribution $G$ being the nuisance parameter---as is the meta-analysis application in \citet{immeta}.  A very natural way to proceed with marginal inference on $\theta$, which we describe in Section~\ref{S:gim}, is to define a function $R_Y(\theta)$ that does not directly depend on the value of the nuisance parameter $\eta$.  This does not immediately resolve the $\eta$-dependence, however, because the distribution function 
\begin{equation}
\label{eq:H}
r \mapsto H_{\theta,\eta}(r) := \prob_{Y|\theta,\eta}\{R_Y(\theta) \leq r\} 
\end{equation}
will generally depend on the unknown $\eta$.  To overcome this dependence on the unknown nuisance parameter, one might try plugging in an estimator $\hat\eta$ based on the available data, which amounts to constructing a generalized inferential model based on the approximate distribution function for  $R_Y(\theta)$, namely, $H_{\theta, \hat\eta}$.  Of course, plugging in an estimate affects the exact validity of the generalized inferential model but, at least intuitively, if $\hat\eta$ is a reasonably accurate estimate of $\eta$, then the corresponding plug-in generalized inferential model ought to be approximately valid.  This is precisely the situation encountered in censored-data applications, and Theorem~\ref{thm:valid} below confirms the above intuition.

\section{Generalized inferential models under censoring}
\label{S:gim}

\subsection{Construction}

For random right-censored data, the full likelihood for $Y$, where $Y_i$ are independently generated from \eqref{eq:data}, is given by 
\[L_y(\theta) = \prod_{i=1}^n \bar{G}(t_i)^{d_i} g(t_i)^{1-d_i} \prod_{i=1}^n f_\theta(t_i)^{d_i} \bar{F}_\theta(t_i)^{1-d_i}.\]
Since our interest is only in the $\theta$ parameter and our censoring times do not depend on $\theta$, the first term of the full likelihood is treated as constant.  Therefore, when we construct the relative likelihood $R_Y(\theta)$ as in \eqref{eq:rel.lik}, we effectively eliminate the nuisance parameter $G$. 

Our generalized inferential model for censored data thus proceeds as outlined in Section~\ref{S:gimintro}, in which the relative likelihood is the connection between the data $Y$, our interest parameter $\theta$, and a scalar auxiliary variable $U$.  That is, we have $R_Y(\theta) = H_{\theta, G}^{-1}(U)$, where $U \sim \prob_U = \unif(0,1)$, where $H_{\theta,G}$ is the distribution function of the relative likelihood as in \eqref{eq:H}; note that, while the relative likelihood itself does not depend on $G$, its distribution does.  That completes the A-step of the construction.  Given the form of the relative likelihood, values closer to 1 suggest values of $\theta$ that are more likely so, for the P-step, we choose predictive random sets in the form of nested intervals $\S = [\tilde{U}, 1]$, where $\tilde{U}\sim\unif(0,1)$, to predict the one-dimensional auxiliary variable $U$.  Then the C-step proceeds exactly as in Section~\ref{S:gimintro}.  If $G$ were known, then it would be relatively simple to evaluate the distribution function $H_{\theta,G}$ and, hence, the plausibility contour in \eqref{eq:pl.contour}; moreover, validity of the generalized inferential model would follow immediately from the general theory.  Of course, $G$ is never known in applications, so we need to suitably modify the above strategy in order to overcome this challenge.  As we indicated above, it makes sense to plug in an estimator of $G$, but the construction of an estimator and justification of the corresponding plug-in method are non-trivial.  The next two subsections address these challenges in turn.

\subsection{Implementation}
\label{S:implementation}

Putting the above inferential model construction into practice requires that the distribution function of the relative likelihood be evaluated, at least approximately, for every $\theta$.  This is straightforward to do when data are not censored.  This is similarly straightforward if data are censored but the censoring distribution $G$ is known.  Indeed, a simple Monte Carlo approximation is available:
\begin{equation}
\label{eq:mc}
H_{\theta, G}(r) \approx \frac{1}{M} \sum_{m=1}^M 1\{R_{Y^{(m)}} (\theta) \leq r \},
\end{equation}
where $\{Y^{(m)}: m = 1, \ldots, M\}$ are independent copies of $Y^*=\{(T_i^*, D_i^*): i=1,\ldots,n\}$ and $(T_i^*, D_i^*)$ as in \eqref{eq:data}, with $X_i^*$ iid from $F_\theta$ and $C_i^*$ iid from the known censoring distribution $G$.  However, in our present context, $H_{\theta, G}$ depends (implicitly) on the unknown distribution $G$ of censoring times, so something more sophisticated than that simple strategy just described is needed.  Here we recommend using a plug-in estimator $\hatG$.   

The Kaplan--Meier estimator was not originally designed to estimate the censoring distribution function, but it is straightforward to simply reverse the event/censored classification.  That is, we still observe $T_i = \min(X_i, C_i)$ but now we think of $C_i$ as the ``event time'' and $X_i$ is the ``censoring time.''  Then we construct the Kaplan--Meier estimator $\hatG$ based on this alternative perspective.  

After swapping the observed/censored classifications, obtaining the Kaplan--Meier estimate is straightforward; we use the built-in functions in R's {\tt survival} package \citep{Therneau2015}.  But there are a few technical points worth making about the estimation process.  Recall that, in typical applications of the Kaplan--Meier estimator of a survival function $S(t)$, if the largest observation corresponds to a ``censored'' outcome, then $\hat S(t)$ {\em does not} vanish as $t \to \infty$, which amounts to putting some positive amount of mass at $\infty$.  In our context, since we interpret the original event times as censored times, our estimate $\hatG$ will put positive mass at $\infty$ when the largest observation is an event, under which $C_i^*$'s drawn from $\hatG$ will equal $\infty$ and, consequently, $T_i^*$'s drawn will correspond to an event time as $X_i^* < C_i^*$.  Our numerical simulations suggest that the Monte Carlo samples, $Y^*$, attained in this manner, reflect the censoring level in the original data.

\subsection{Validity properties}
\label{SS:valid}

That the corresponding inferential model satisfies the validity property follows immediately from the arguments presented in \citet{gim}.  Since our predictive random sets are tailored such that the plausibility contours are stochastically no larger than uniform, i.e., $H_{\theta,G}(R_Y(\theta)) \stleq \unif(0,1)$ when $Y \sim \prob_{Y|\theta,G}$, then
\begin{equation}
\label{eq:validforcensored}
\sup_{\theta \in A} \prob_{Y|\theta, G}\{ p_Y(A) \leq \alpha\} \leq \alpha, \quad \forall \; \alpha \in [0,1], \quad \forall \; A \subseteq \Theta. 
\end{equation}
A desirable consequence of validity is that confidence regions having the nominal frequentist coverage probability can be constructed immediately based on the plausibility function output.  Indeed, the set 
\begin{equation}
\label{eq:interval}
\{\theta: p_y(\theta) > \alpha\}
\end{equation}
is a nominal $100(1-\alpha)$\% confidence region for any $\alpha \in (0,1)$.  This follows since the probability that the above region contains the true parameter value $\theta$ equals the probability that $p_Y(\theta) > \alpha$ which, in turn, equals $1-\alpha$. 

Can anything be said about validity of the inferential model derived from the above algorithm with the plug-in estimator $\hatG$?  That is, can we conclude that 
\[ \prob_{Y|\theta, G}\{p_Y(\theta; \hatG) \leq \alpha\} \leq \alpha, \]
at least approximately?  Here $p_y(\theta; \hatG)$ denotes the plausibility function obtained by applying the above algorithm with $\hatG$ plugged in for the unknown $G$, i.e., simulating $C_i^*$'s iid from $\hatG$.  The dependence of $p_y(\theta; \hatG)$ on the Kaplan--Meier estimator, an infinite-dimensional quantity, is quite complicated, but at the very least, under mild assumptions, our proposed generalized inferential model should be valid for large $n$.  The following theorem confirms this.  Since we are considering asymptotic properties as $n \to \infty$, we embellish on our previous notation to emphasize the dependence on $n$.   

\begin{thm}
\label{thm:valid}
Let $Y^n = (Y_1,\ldots,Y_n)$ be a sample obtained under random censoring, $\prob_{Y^n|\theta,G}$, as described in \eqref{eq:data}, where both $\theta$ and $G$ are unknown.  The proposed plausibility function for inference on $\theta$, defined by
\[p_{Y^n} (\theta; \hatG_n) = H_{\theta, \hatG_n}^n (R_{Y^n} (\theta)), \]
with $\hatG_n$ the Kaplan--Meier estimator of $G$ described above based on $Y^n$, satisfies 
\[ p_{Y^n}(\theta; \hatG_n) \to \unif(0,1) \quad \text{in distribution as $n \to \infty$}. \] 
\end{thm}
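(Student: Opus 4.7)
The plan is to show that replacing $G$ by $\hatG_n$ perturbs the plausibility function by a vanishing amount, so that asymptotic uniformity follows from the exact uniformity available in the ``oracle'' case where $G$ is known.

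First, I would fix the oracle benchmark: under mild smoothness of $F_\theta$, the relative likelihood $R_{Y^n}(\theta)$ has a continuous distribution function $H^n_{\theta,G}$ under $\prob_{Y^n|\theta,G}$, so the probability integral transform gives $H^n_{\theta,G}(R_{Y^n}(\theta)) \sim \unif(0,1)$ exactly, for every $n$. Next, I would invoke uniform consistency of the reverse Kaplan--Meier estimator, $\sup_{t \leq \tau}|\hatG_n(t) - G(t)| \to 0$ in probability for any $\tau$ in the interior of the identifiable support; this follows from classical consistency results for Kaplan--Meier after the swap of event/censoring indicators described in Section~\ref{S:implementation}. The decisive step is to show that the Monte Carlo sampling distribution of the relative likelihood is insensitive to this plug-in, i.e., $H^n_{\theta,\hatG_n}(r) - H^n_{\theta,G}(r) \to 0$ in probability, uniformly in $r$. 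This is most cleanly obtained through a Wilks-type argument: under standard regularity for censored likelihoods, $-2\log R_{Y^n}(\theta)$ is asymptotically $\chi^2$ with a limit that does not depend on the censoring distribution, so both $H^n_{\theta,G}$ and $H^n_{\theta,\hatG_n}$ converge to the same continuous limiting CDF $H^\infty_\theta$; a Polya-type argument then upgrades pointwise convergence to uniform convergence in probability. Chaining this with the oracle benchmark via Slutsky's theorem delivers $p_{Y^n}(\theta;\hatG_n) \to \unif(0,1)$ in distribution.

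The main obstacle will be the third step. It is delicate for two reasons: first, Wilks' theorem for the censored likelihood requires enough regularity on $F_\theta$ and on the censoring mechanism to ensure identifiability and a positive definite information under censoring; second, one must propagate the uniform consistency of $\hatG_n$ through the finite-$n$ functional $G \mapsto H^n_{\theta,G}$, either by coupling the Monte Carlo draws from $\hatG_n$ and from $G$ via common uniforms (so that only samples falling in a vanishing set contribute a difference to the relative likelihood) or by appealing to the fact that the limiting chi-square distribution is free of $G$, so the two CDFs are eventually close for \emph{any} consistent estimator. Pinning down the precise conditions for the Wilks step---particularly under heavy censoring or when the support of $F_\theta$ interacts with that of $G$---is where the bulk of the technical work will lie.
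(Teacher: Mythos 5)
Your proposal follows essentially the same route as the paper's proof: decompose $p_{Y^n}(\theta;\hatG_n)$ into the exactly uniform oracle term $H^n_{\theta,G}(R_{Y^n}(\theta))$ plus a remainder $\Delta_n$, argue $\Delta_n\to 0$ because both $H^n_{\theta,G}$ and $H^n_{\theta,\hatG_n}$ converge to the same $G$-free chi-square limit (pointwise, then uniformly since these are distribution functions with a continuous limit), and finish with Slutsky. The step you flag as the main obstacle is precisely the one the paper closes by citation---Breslow--Crowley's $n^{1/2}\|\hatG_n-G\|=O(1)$ in probability places $\hatG_n$ in a contiguity neighborhood $G_n=G+n^{-1/2}Z_n$, and Banerjee's contiguity result then gives the Wilks limit under $\prob_{Y^n|\theta,\hatG_n}$ as well as under $\prob_{Y^n|\theta,G}$---so note that the paper's argument leans on the root-$n$ rate of the Kaplan--Meier estimator rather than the plain uniform consistency you invoke, and your suggestion that ``any consistent estimator'' would suffice claims more than this contiguity argument actually delivers.
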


\begin{proof} 
See Appendix~\ref{SS:proof2}.
\end{proof}

Theorem~\ref{thm:valid} establishes approximate validity, but the theoretical support for our approach might actually be stronger than the theorem suggests.  First, the proof amounts to comparing the distribution of $R_{Y^n}(\theta)$ under two different distributions for $Y^n$, one based on $(\theta,G)$ and one on $(\theta,\hatG_n)$.  The relative likelihood is well-known to be an approximate pivot, and its distributional dependence on these parameters is more sensitive in the $\theta$ direction than in the $G$ direction.  So, since we are holding $\theta$ fixed and only moving a small amount in the $G$ direction, we can expect that our approximation would be quite accurate.  To see this accuracy in action, we take 10,000 samples of size $n=15$ in which $X_i$'s are generated from a standard exponential subject to random right censoring from the $\unif(0,5)$.  A Monte Carlo estimate of the distribution function $\alpha \mapsto \prob_{Y|\theta, G}\{p_Y(\theta; \hatG) \leq \alpha\}$ shown in Figure~\ref{fig:validity} is approximately uniform, hence approximate validity.  Moreover, by starting with the relative likelihood $R_y(\theta)$ in \eqref{eq:rel.lik}, we removed almost all dependence on the nuisance parameter $G$; that is, the exact distribution of our relative likelihood is roughly constant in $G$ and thus the plug-in estimator we used to get $\hatG$ apparently does not need to be especially accurate.  As a result, the plausibility output using our plug-in method as described in Section~\ref{S:implementation} is close to the exact distribution.  Simulated- and real-data examples in Section~\ref{S:examples} further demonstrate the proposed method's strong performance compared to others, supporting our validity claim.  

\begin{figure}[t]
\begin{center}
  \includegraphics[width=.5\textwidth]{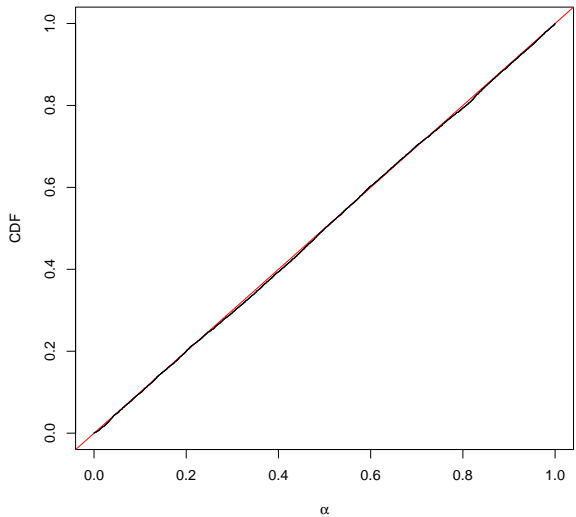}
  \caption{Distribution of $\alpha \mapsto \prob_{Y|\theta, G}\{p_Y(\theta; \hatG) \leq \alpha\}$ (black) compared with that of $\unif(0,1)$ (red) based on Monte Carlo samples from a standard exponential distribution subject to random right censoring. The average censoring level among all 10,000 replications at this setting is 19.9\%.}
  \label{fig:validity}
  \end{center}
\end{figure}

\section{Examples}
\label{S:examples}

We compare our proposed approach against frequentist and Bayesian methods with simulated and real data.  The exponential and Weibull examples are taken from the {\tt survival} package in R, while the last log-normal example is taken from \citet{Krish2011}.  We consider these three  parametric distributions that are commonly used in time-to-event analyses, and we generate 10,000 replications of censored data under various settings of these distributions.  We repeat each set of simulations at four sample sizes of $n \in \{15, 20, 25, 50\}$.  As our results suggest, plausibility functions consistently outperform more familiar methods, achieving nearly the nominal $100(1-\alpha)$\% coverage rate across different distributions, parameter settings, and sample sizes. 

\subsection{Exponential}

The classic time-to-event distribution is exponential, characterized by a constant hazard rate $\theta > 0$, in which the density function is $f_\theta(t) = \theta e^{-\theta t}$.  For $n$ items, independently subject to random right censoring, summarized by $y=\{(t_i,d_i)\}$ as above, the maximum likelihood estimate is $\hat{\theta} = \sum_{i=1}^n d_i / \sum_{i=1}^n t_i$.  From its asymptotic normality, a 95\% confidence interval is easily obtained as $\hat{\theta} \pm 1.96 I(\hat{\theta})^{-1/2}$, where $I(\hat{\theta})$ is the observed information.  From a Bayesian standpoint, the censoring mechanism can be safely ignored as the likelihood can be formed from \eqref{eq:lik} and combined with a conjugate $\gam(\alpha_0, \beta_0)$ prior to arrive at the posterior $\gam(\alpha_0 + \sum_{i=1}^n d_i, \beta_0 + \sum_{i=1}^n t_i)$.  Posterior credible intervals are then easily obtained.  Experiments with various values of $(\alpha_0,\beta_0)$ revealed that $\alpha_0=2$ and $\beta_0=1$ had the best overall performance across our settings with respect to coverage probability of the credible intervals.  

From an inferential model perspective, we begin with the baseline association of the relative likelihood for $\theta \in \Theta$,
\begin{equation}
\frac{\theta^{\sum_i D_i} e^{-\theta \sum_i T_i}}{\hat\theta^{\sum_i D_i} e^{-\hat\theta \sum_i T_i}} = H_{\theta, G} ^{-1}(U), \quad U\sim \prob_U = \unif(0,1).
\label{eq:imexp}
\end{equation}
As described above, we write $R_Y(\theta)$ for the left-hand side of the above display.  For fixed data $y$, we follow through our A-step with the singleton-valued map
\[ \Theta_y(u) =\{\theta: R_y(\theta) = H_{\theta, G} ^{-1}(u) \}, \quad u \in [0,1]. \]
Next, the P-step requires introducing a predictive random set $\S$ in \eqref{eq:one.sided} for $U$.  We then combine our A- and P-steps
\[ \Theta_y(\S) = \bigcup_{u \in \S} \Theta_y(u) = \{\theta: H_{\theta, G}(R_y(\theta)) \geq \tilde U\}, \quad \tilde U \sim \unif(0,1). \]
And we summarize the distribution of this random set $\Theta_y(\S)$ by a plausibility function
\[ p_y(\theta) = H_{\theta, G}(R_y(\theta)), \quad \theta > 0. \]
A 100($1-\alpha$)\% confidence interval can be obtained as the upper level set of the plausibility function as in \eqref{eq:interval}.  Evaluating this plausibility function requires the Monte Carlo procedure discussed in Section~\ref{S:gim}.  

\begin{figure}[p]
   \centering
   \subfloat[][$n=15$]{\includegraphics[width=.45\textwidth]{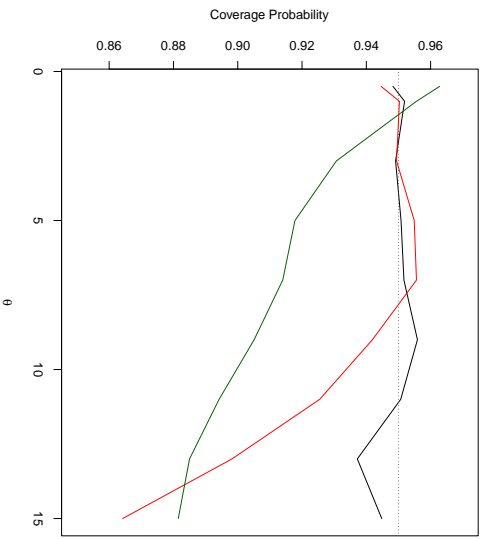}}\quad
   \subfloat[][$n=20$]{\includegraphics[width=.45\textwidth]{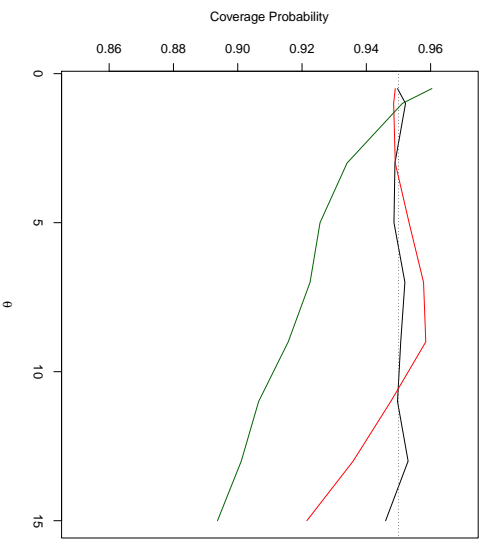}}\quad
   \subfloat[][$n=25$]{\includegraphics[width=.45\textwidth]{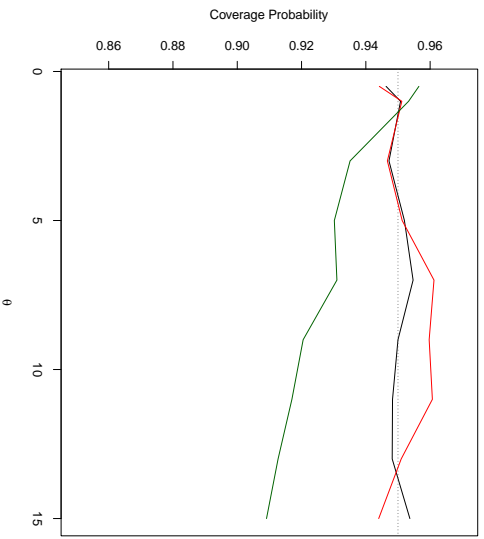}}\quad
   \subfloat[][$n=50$]{\includegraphics[width=.45\textwidth]{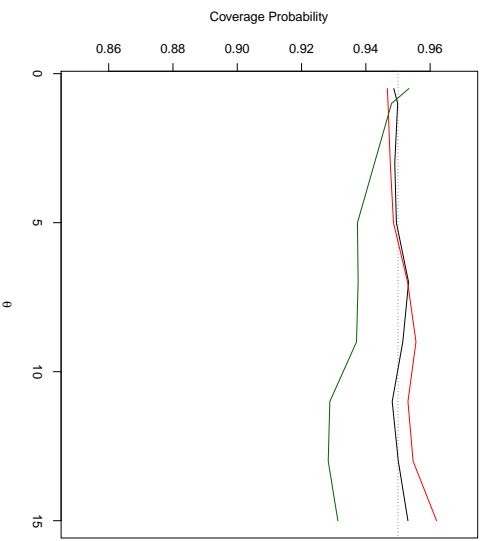}}\quad
   \caption{Coverage probability of the 95\% plausibility region for $\theta$ in the exponential model (black). Results compared to those based on maximum likelihood (red) and Bayesian with a $\gam(2,1)$ prior (green).}
   \label{fig:expsim}
\end{figure}

For comparison, we simulate 10,000 replications of lifetimes arising from nine different $\theta$ settings in the exponential distribution.  For each of these 90,000 simulations, the lifetimes $X_1, \ldots, X_n \sim F_\theta$ generated were subject to random right censoring from $C_1, \ldots, C_n \sim \unif(0,5)$, allowing us to compare the coverage of our inference procedure under a wide range of censoring levels.  Results shown in Figure \ref{fig:expsim} demonstrate that the nominal 100($1-\alpha$)\% coverage is achieved by our proposed method.  Note that this problem is particularly challenging in the $n=15$ and large $\theta$ case, since large $\theta$ implies more censoring.  The maximum likelihood and Bayes approaches appear to be substantially affected by this extreme censoring, while our generalized inferential model is not.  


For a real-data illustration, we consider the primary biliary cirrhosis (PBC) data from a clinical trial at the Mayo Clinic from 1974 to 1984.  The data consists of $n=312$ recorded survival times for patients involved in the randomized trial, along with a corresponding right censoring indicator; there are 168 censored cases, more than 50\% of total observations.  Figure~\ref{fig:pbc} shows the point plausibility function $p_y(\theta)$ for a range of parameter values, along with the corresponding 95\% plausibility interval \eqref{eq:interval}.  For comparison, 95\% confidence intervals based on asymptotic normality of the maximum likelihood estimate are also displayed.  The intervals derived from the plausibility function are almost indistinguishable from the likelihood-based intervals, which is a sign of our proposed approach's efficiency, since the latter are the asymptotically ``best'' intervals.  

\begin{figure}[t]
\begin{center}
\includegraphics[width=.75\textwidth]{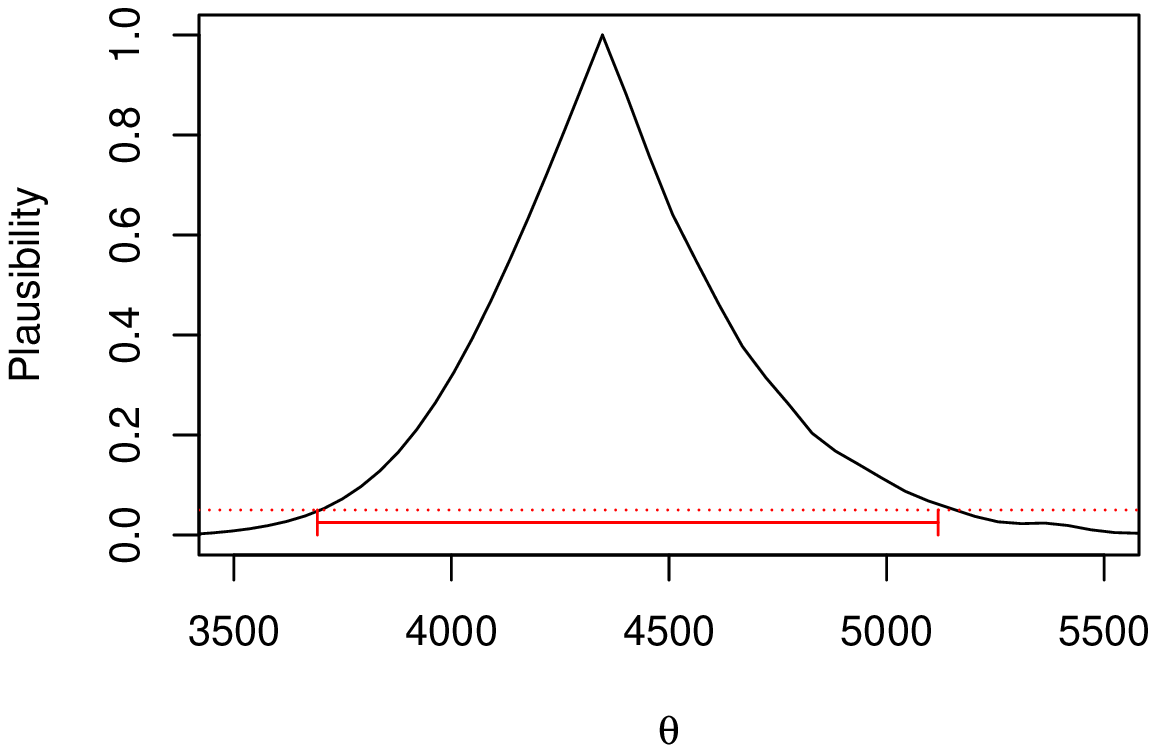}
\caption{Point plausibility function for the mean in the PBC example under an exponential model (black). Reference line at $\alpha = 0.05$ (dotted) and approximate 95\% confidence intervals based on maximum likelihood (red).}
\label{fig:pbc}
\end{center}
\end{figure}


\subsection{Weibull}

One of the most widely used time-to-event distributions is the Weibull, with applications in manufacturing, health, etc., as it has sufficient flexibility to capture changes in the hazard rate \citep{lawless2011}.  Exponential is a special case of the Weibull when the shape parameter $\beta = 1$.  The density and survival functions, indexed by $\theta = (\beta, \lambda)$, are
\[ f_\theta(t)= \lambda \beta t^{\beta-1}\exp{(-\lambda t^\beta)}, \quad \bar{F}_\theta(t) = \exp{(-\lambda t^\beta)}. \]
Similar to the setup as described for the exponential example, we compare the performance of our proposed approach against that of a more traditional frequentist or objective Bayesian approach.  An inferential model requires that we simulate the distribution of $R_Y( \theta)$; so for a finite grid of $\theta=(\beta,\lambda)$ values, for each pair, 500 Monte Carlo samples of $Y^*$ are obtained by taking the minimum between realizations of $X^* \sim \weib(\beta, \lambda)$ and $C^* \sim \hatG$, the modified Kaplan--Meier estimate.  We implement this procedure for 10,000 replications of lifetimes arising from six different settings of the Weibull distribution.  These 60,000 replications were each subject to random right censoring from $G \sim \unif(0,4)$.  For a Bayes approach, multiple non-informative and weakly informative priors were used, from which the $\gam(0.1, 1)$ prior on the shape and $\nm(0,10)$ prior on the log transformed scale were selected, as they resulted in credible regions with the highest coverage.  Surprisingly, as shown in Figure~\ref{fig:weisim}, despite the careful specification of these priors, the generalized inferential model still remains the only method that achieves nominal coverage across these censored data settings.  The joint confidence sets from maximum likelihood under-cover, while the joint credible sets from our Bayes approach achieves nominal coverage only past a specific censoring threshold.  Further investigations into interval lengths (not shown) also demonstrate our plausibility intervals are on average shorter than the Bayesian intervals.


\begin{figure}[p]
   \centering
   \subfloat[][$n=15$]{\includegraphics[width=.45\textwidth]{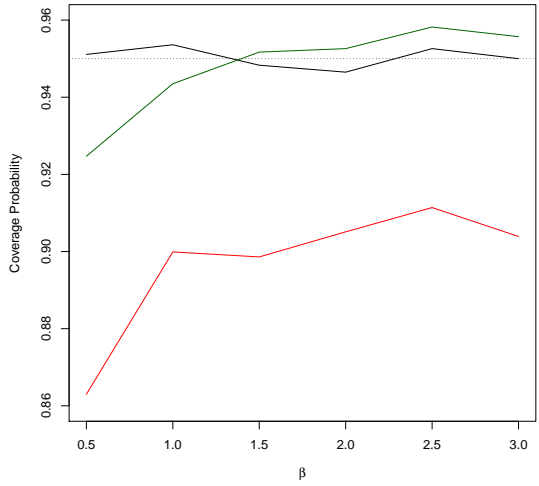}}\quad
   \subfloat[][$n=20$]{\includegraphics[width=.45\textwidth]{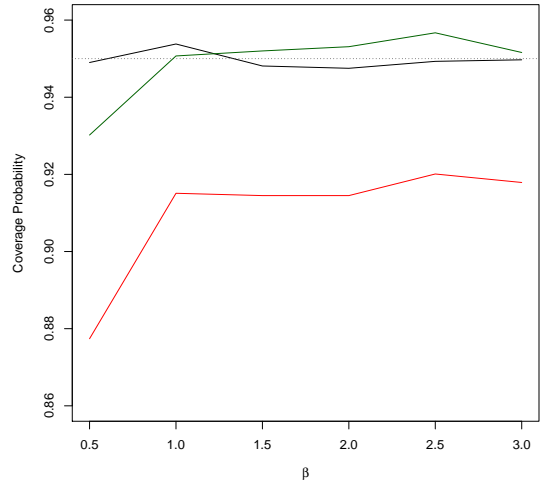}}\quad
   \subfloat[][$n=25$]{\includegraphics[width=.45\textwidth]{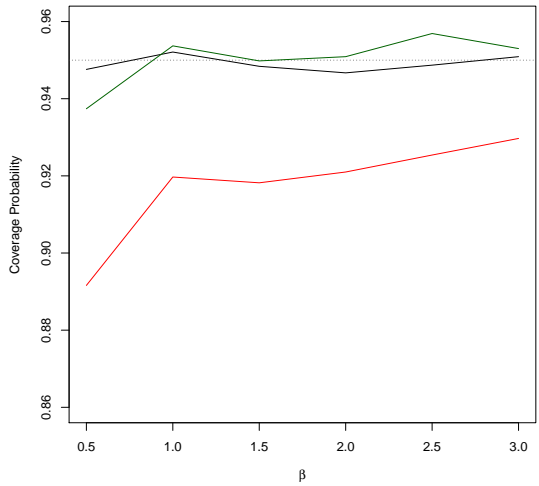}}\quad
   \subfloat[][$n=50$]{\includegraphics[width=.45\textwidth]{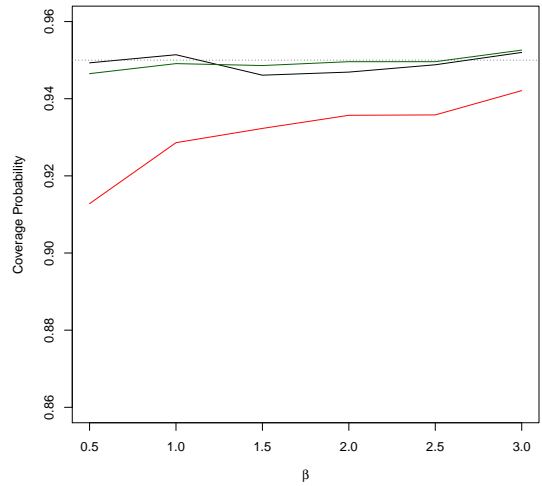}}\quad
   \caption{Coverage probability of the 95\% plausibility region for $\theta=(\beta, \lambda)$ in the Weibull model (black). Results compared to  maximum likelihood (red) and Bayesian intervals based on a $\gam(0.1,1)$ prior on the shape and $\nm(0, 10)$ prior on the log transformed scale (green).}
   \label{fig:weisim}
\end{figure}

For a real-data example, we consider survival data on ovarian cancer patients from a clinical trial that took place from 1974 to 1977.  This data set has $n = 26$ survival times for patients that entered the study with stage II or IIIA cancer and were treated with cyclophosphamide alone or cyclophosphamide with adriamycin.  Of this patient group, 14 survived (or was censored) by the end of the study, while 12 died \citep{Edmonson1979}.  Despite the small sample size and high censoring level, our plausibility contours capture the non-elliptical shape as shown by the Bayesian posterior in Figure~\ref{fig:ovar}.

\begin{figure}[t]
\begin{center}
\includegraphics[width=.75\textwidth]{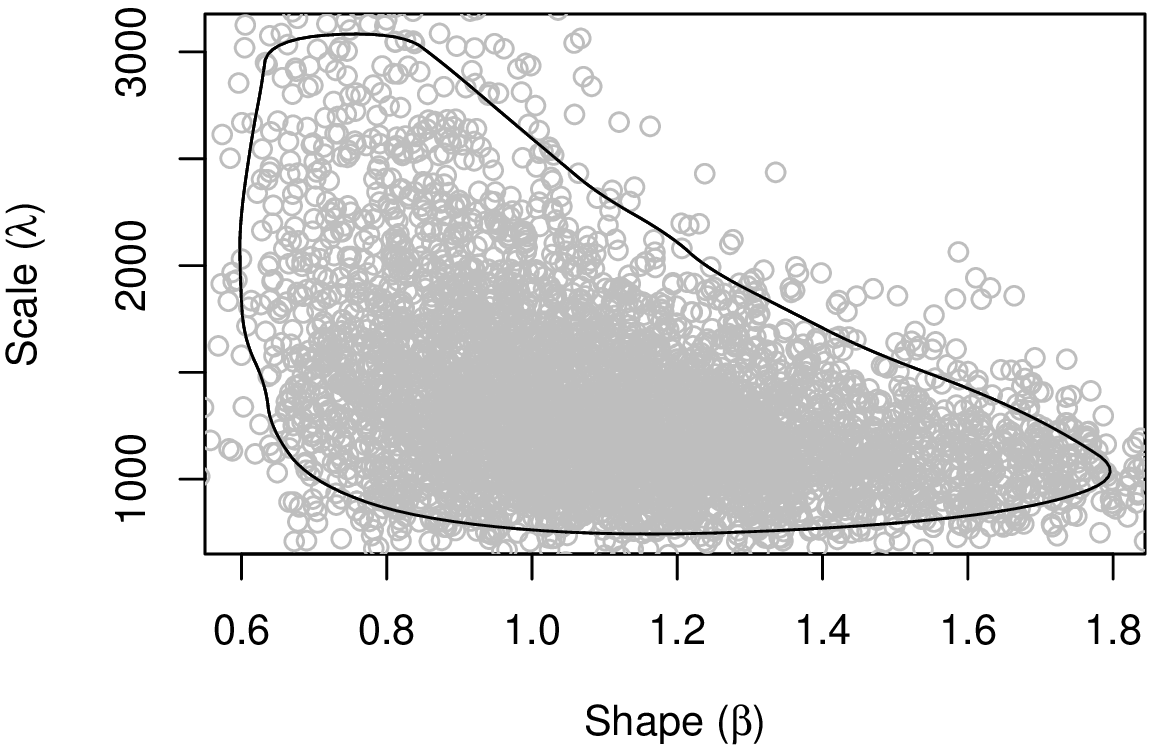}
\caption{Plausibility contour (black) for $\theta=(\beta,\lambda)$, the shape and scale parameter pair, in the ovarian cancer data under a Weibull model subject to Type~I right censoring. Bayesian posterior samples based on a $\gam(1, 0.1)$ prior for the shape and $\nm(0, 10)$ prior for the log transformed scale parameter (gray).}
\label{fig:ovar}
\end{center}
\end{figure}

\subsection{Log-normal}

Within environmental science, the log-normal distribution is often used to approximate data that are censored to the left, e.g., chemical pollutants that can only be detected above some minimal threshold \citep{Krish2011}. The density function, indexed by $\theta=(\mu, \sigma)$, is 
\[ f_\theta(t) = \frac{1}{(2\pi)^{1/2} \sigma t} \exp\Bigl\{-\frac{1}{2} \Bigl(\frac{\log t-\mu}{\sigma}\Bigr)^2\Bigr\}. \]
Similar to our examples above, we compare the coverage performance of our plausibility contours against that of ellipses based on asymptotic normality of the maximum likelihood estimator and posterior credible regions based on a $\gam(1, 0.1)$ prior on the precision $\tau = \sigma^{-2}$ and $\nm(0, 1000/\tau)$ prior on the mean.  Again, 10,000 replications of censored data were generated from 6 different settings of the log-normal distribution, subject to left censoring under $G \sim \unif(0,1)$.  In order to approximate the distribution of $R_Y(\theta)$, however, our modified Kaplan--Meier estimate $\hatG$ now requires putting positive mass at 0 when the smallest observation corresponds to an actual event record, so the challenges we encountered under right censoring are simply reversed.  A relevant quantity of interest in log-normal model applications is the mean, $\psi = \exp(\mu + \sigma^2/2)$, a non-linear function of $(\mu,\sigma)$.  Figure~\ref{fig:logsim} shows that, under various censoring levels, our proposed method gives marginal plausibility intervals for $\psi$ that achieve the nominal $100(1-\alpha)$\% coverage while, again, the other methods drastically under-cover.


\begin{figure}[p]
   \centering
   \subfloat[][$n=15$]{\includegraphics[width=.45\textwidth]{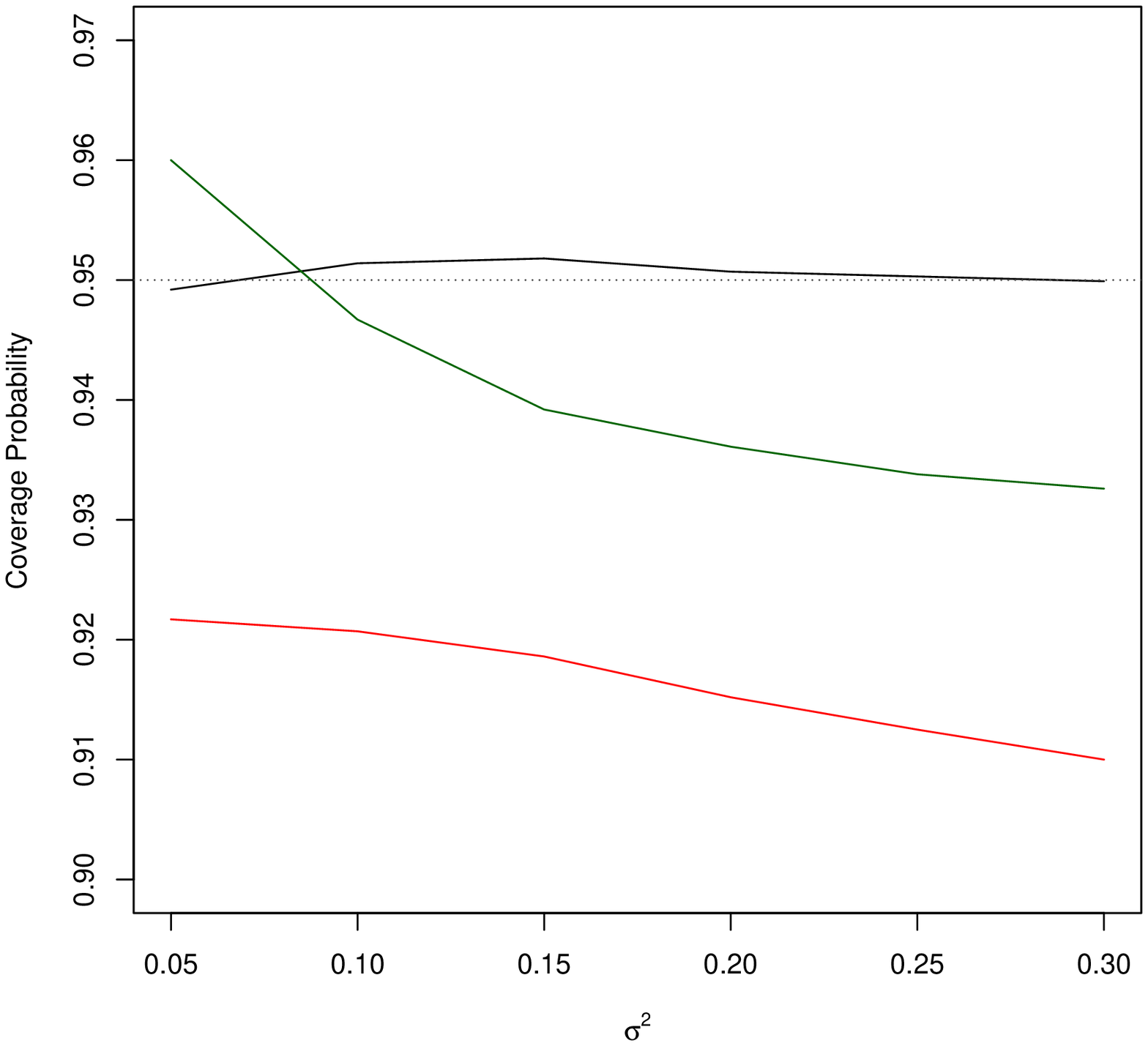}}\quad
   \subfloat[][$n=20$]{\includegraphics[width=.45\textwidth]{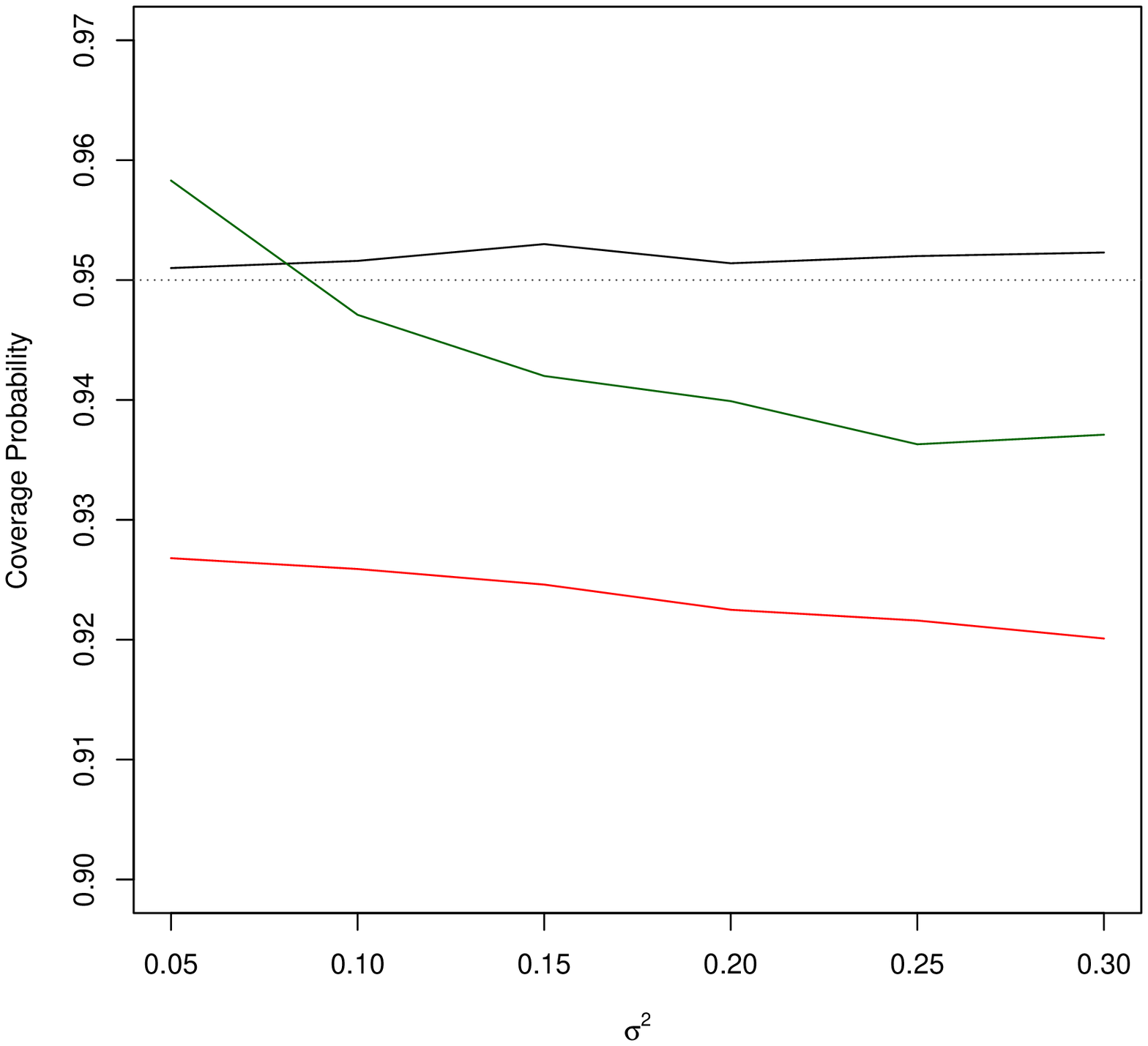}}\\
   \subfloat[][$n=25$]{\includegraphics[width=.45\textwidth]{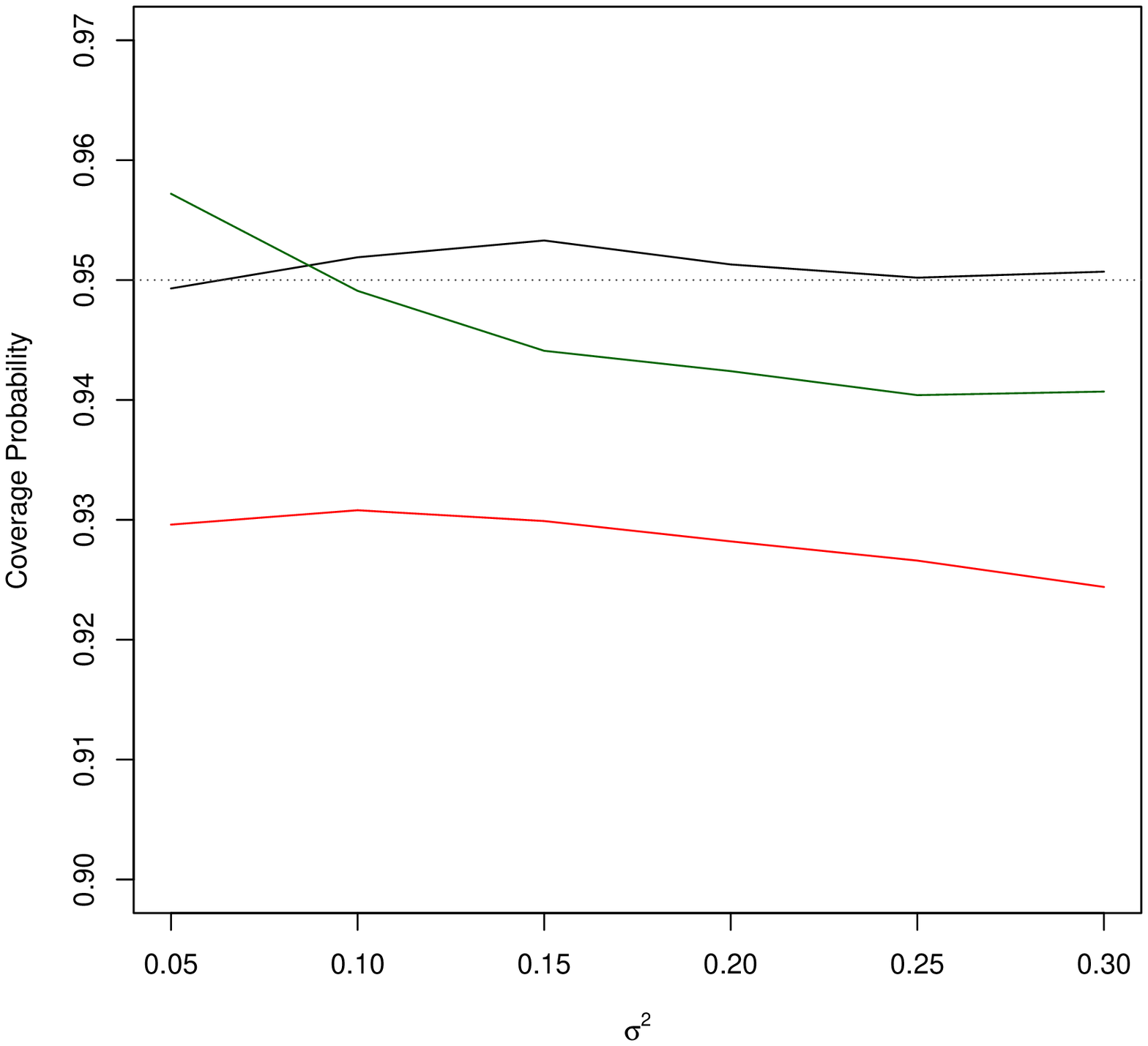}}\quad
   \subfloat[][$n=50$]{\includegraphics[width=.45\textwidth]{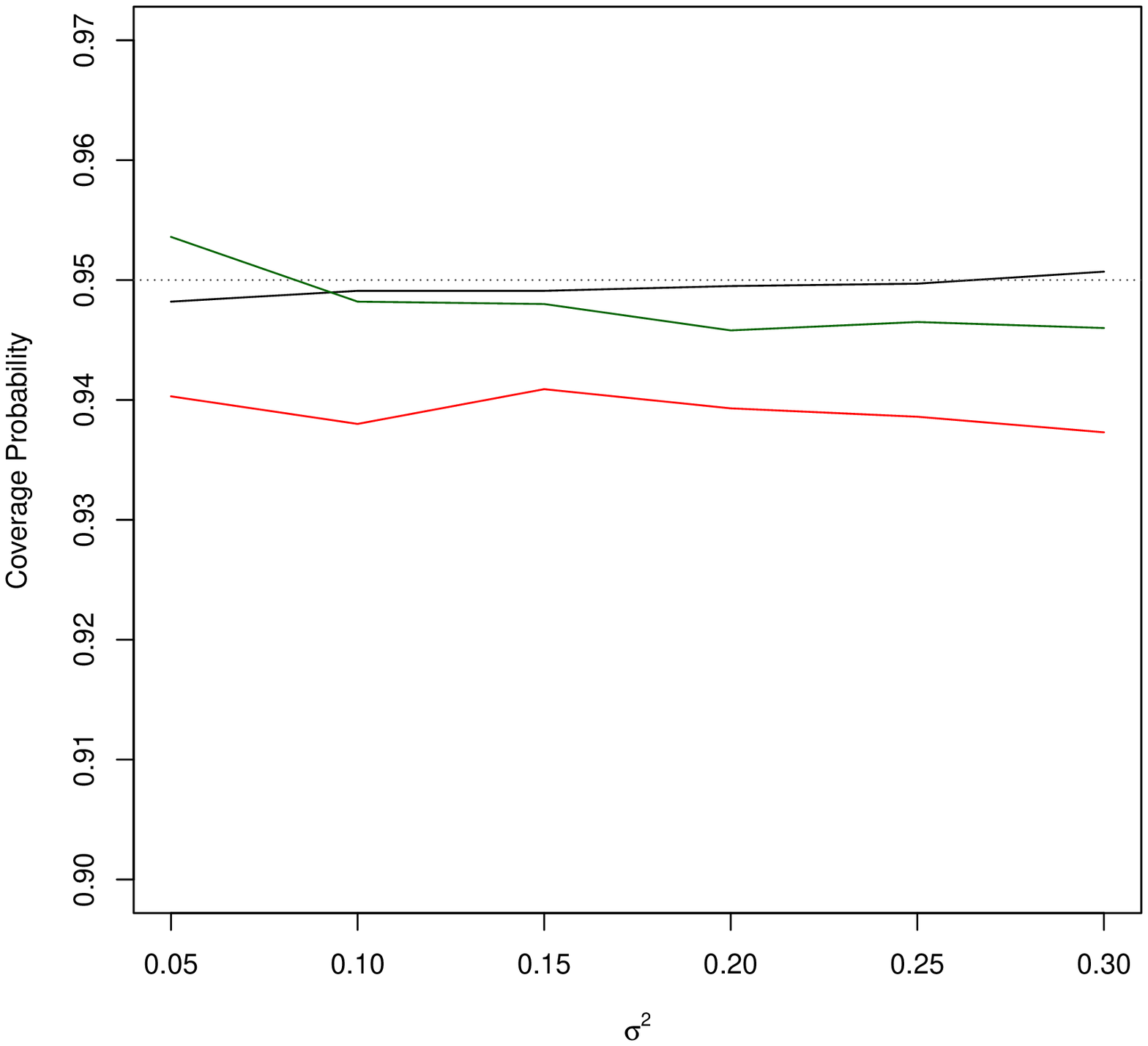}}
   \caption{Coverage probability of the 95\% plausibility interval for $\psi$ in the log-normal model (black). Results are compared to maximum likelihood (red) and Bayesian intervals based on a $\gam(1,0.1)$ prior on the precision $\tau = \sigma^{-2}$ and $\nm(0, 1000/\tau)$ prior on the mean (green).}
   \label{fig:logsim}
\end{figure}

We use Atrazine concentration data collected from a well in Nebraska as an example.  This set of 24 observations were randomly subject to two lower detection limits of 0.01 and 0.05 $\mu$g/l of which 11 observations were censored.  Despite this censoring level of 45.8\%, previous studies indicate the log-normality assumption holds \citep{Helsel2005}.  We apply our Monte Carlo approach to determine the joint plausibility contours for $\theta=(\mu, \sigma^2)$ in Figure \ref{fig:nebraska_joint}, along with the marginal plausibility function for the log-normal mean, $\psi$, in Figure \ref{fig:nebraska_mpl}.  The point at which we assign the highest plausibility aligns with the maximum likelihood estimator, $\hat{\mu} = -4.206$ and $\hat{\sigma} = 1.462$ \citep{Krish2011}.

\begin{figure}[t]
\centering
  \includegraphics[width=.75\textwidth]{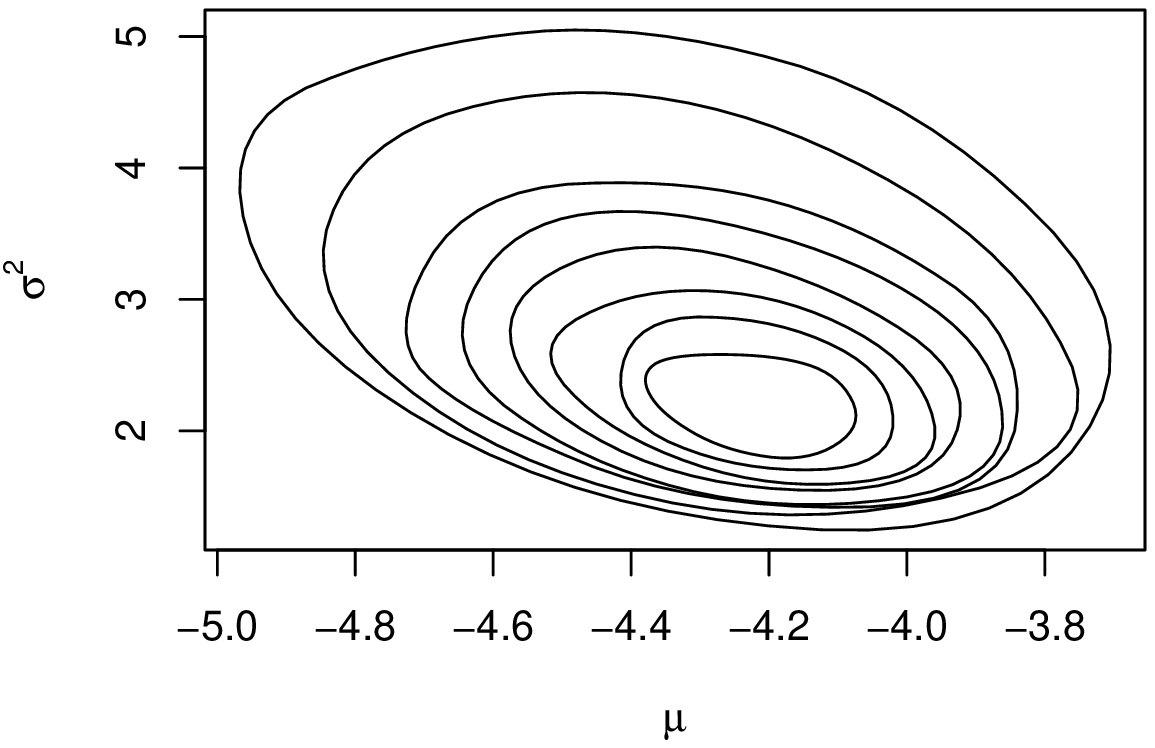}
  \caption{Plausibility contours at each $\alpha = 10$\% increment level beginning at 20\% for the Atrazine example under a log-normal model with Type I left censoring.}
  \label{fig:nebraska_joint}
\end{figure}

\begin{figure}[t]
\centering
  \includegraphics[width=.75\textwidth]{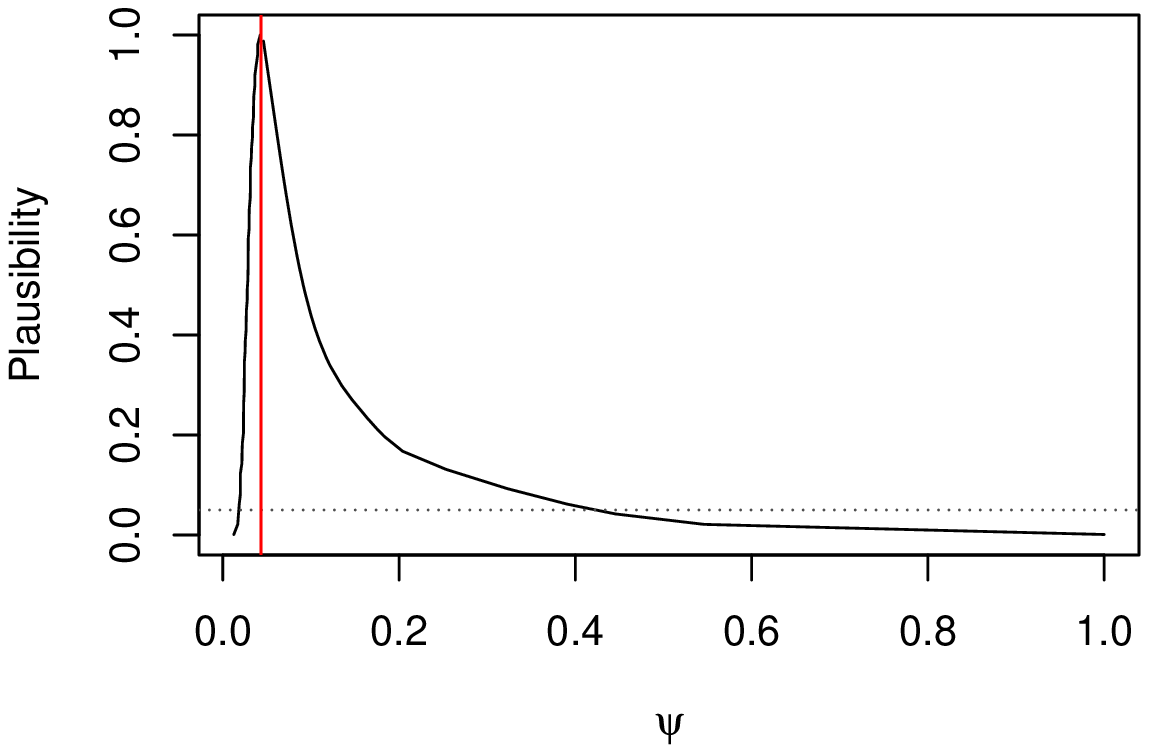}
  \caption{Marginal plausibility function for the mean $\psi$ in the Atrazine example. Reference lines at $\alpha = 0.05$\% and at the maximum likelihood estimate (red).}
  \label{fig:nebraska_mpl}
\end{figure}

\section{Conclusion}
\label{S:discuss}

In this paper, we proposed a specific inferential model construction for contexts in which the data are corrupted via censoring.  The main obstacle is that the censoring distribution is a unknown; despite not being of scientific interest, the presence of an infinite-dimensional nuisance parameter complicates the inferential model construction.  To overcome this challenge, we extend the generalized inferential model framework in \citet{gim} to cover the case of censoring according to a distribution $G$.  We propose a plug-in approximation to the known-$G$ inferential model construction with one that relies on a modified version of the classical Kaplan--Meier estimator, swapping the roles of event and censoring times.  Approximate validity is established in Theorem~\ref{thm:valid}, but we argued that the validity result is actually stronger than the theorem suggests.  We demonstrate numerically that the proposed inferential model approach outperforms traditional maximum likelihood and Bayesian solutions in terms of coverage probability.  


Aside from efforts to establish the validity property more rigorously for small $n$, it is of interest to explore complicated and practical types of censored-data models, e.g., ones where censoring depends on covariates so that an assumption of random censoring might not be warranted.  In principle, the approach described---with a generalized association based on the distribution of relative likelihood---would also work in more general cases, the optimization and Monte Carlo computations required to evaluate the distribution function $H_{\theta, G}$ would be much more involved.  Ongoing efforts are focused on this and other general improvements to the simple Monte Carlo computations described here.

\appendix

\section{Proof of Theorem~\ref{thm:valid}}
\label{SS:proof2}

Start by writing $p_{Y^n}(\theta; \hatG_n) = p_{Y^n}(\theta; G) + \Delta_n$, where 
\[\Delta_n = H_{\theta, \hatG_n}^n(R_{Y^n}(\theta)) - H_{\theta, G}^n(R_{Y^n}(\theta)), \] 
with $\hatG_n$ the Kaplan--Meier estimate and $G$ the true censoring distribution.  The key insight is that $p_{Y^n}(\theta; G)$ is exactly uniformly distributed under $\prob_{Y^n|\theta,G}$, so if we can show that $\Delta_n \to 0$ in probability, the claim will follow from Slutsky's theorem.  

A first observation is that  
\[ |\Delta_n| \leq \sup_{r \in [0,1]} \bigl| H_{\theta, \hatG_n}^n (r) - H_{\theta, G}^n(r) \bigr|, \]
so we can prove the claim by showing that the above difference vanishes uniformly.  But since these are distribution functions, it is enough to show that the difference vanishes pointwise, at each fixed $r$.  To prove pointwise convergence, we refer to \citet{banerjee2005} who shows that the usual large sample properties for the relative likelihood $R_{Y^n}(\theta)$ hold under $\prob_{Y^n| \theta, G}$ and under $\prob_{Y^n| \theta, G_n}$, as long as $G$ and $G_n$ are ``close.''  In particular, he shows that, for any $G_n$ that satisfies $G_n = G + n^{-1/2} Z_n$ for $Z_n$ bounded in probability, the two distributions $\prob_{Y^n| \theta, G}$ and $\prob_{Y^n| \theta, G_n}$ are mutually contiguous and, therefore, 
\begin{equation}
\label{eq:wilks}
-2\log R_{Y^n}(\theta) \to \chisq(\text{dim}(\theta)) \quad \text{in distribution as $n \to \infty$},
\end{equation}
under both $\prob_{Y^n| \theta, G}$ and $\prob_{Y^n| \theta, G_n}$; see also, \citet{murphy1997, murphy2000}.  Theorem 5 in \citet{Breslow1974} establishes that the Kaplan--Meier estimator satisfies 
\[ n^{1/2} \|\hatG_n - G\| = O(1) \quad \text{in probability as $n \to \infty$}, \]
where $\|G-G'\| = \sup_{t \leq \tau}|G(t)-G'(t)|$ and $\tau$ is any value such that $\{1-F_\theta(\tau)\}\{1-G(\tau)\} > 0$.  Therefore, we have 
\begin{equation}
H_{\theta, G}^n(r) \to H^\infty (r) \quad \text{and} \quad H_{\theta, \hatG_n}^n(r) \to H^\infty(r), \quad n \to \infty
\label{eq:conv}
\end{equation}
where $H^\infty$ is the limiting distribution function of $R_{Y^n}(\theta)$ from \eqref{eq:wilks}.  If we write 
\[ \bigl| H_{\theta, \hatG_n}^n(r) - H_{\theta, G}^n(r) \bigr| \leq \bigl| H_{\theta, \hatG_n}^n(r) - H^\infty (r)| +|H_{\theta, G}^n(r) - H^\infty (r) \bigr|, \] 
then we immediately see that the right-hand converges to 0 in $\prob_{Y^n|\theta,G}$-probability as $n \to \infty$.  This, in turn, implies the same for $\Delta_n$ and, applying Slutsky's theorem as discussed above, we can conclude that $p_{Y^n}(\theta; \hatG_n) \to \unif(0,1)$ in distribution.

\bibliographystyle{apa}
\bibliography{cahoon19.bib}

\end{document}